\documentclass[longbibliography,
 reprint,
 amsmath,amssymb,
 aps,
 pra,
]{revtex4-2}

\usepackage{graphicx}
\usepackage{dcolumn}
\usepackage{bm}
\usepackage{braket}
\usepackage{color}
\usepackage{amsthm}
\newtheorem{thm}{theorem}
\newtheorem{lemma}[thm]{Lemma}
\newtheorem{theorem}[thm]{Theorem}

\begin{document}

\preprint{APS/123-QED}

\title{Topology-defined computation in knitted textiles}

\author{Daisuke S. Shimamoto}
\affiliation{
Graduate School of Arts and Sciences, The University of Tokyo, 3-8-1 Komaba, Meguro, Tokyo, 153-8902, Japan
}
\affiliation{Research Organization of Science and Technology, Ritsumeikan University, 1-1-1 Noji-higashi, Kusatsu, Shiga 525-8577, Japan}

\date{\today}

\begin{abstract}
Mechanical computation, in which logic functions are realized through deformation rather than electronics, has been demonstrated in systems such as origami, kirigami, and mechanical metamaterials. In these systems, logic states and functions are typically determined by geometry and material properties, making it sensitive to deformation and imperfections. Here we introduce a mechanical computing architecture in which logic is defined by topology rather than geometry. The circuit is realized as a knitted textile formed from a single continuous yarn, where information is encoded in the topology of stitches and processed through controlled unraveling. By discretizing the textile into a lattice of interacting cells, we construct topological propagation rules that implement universal logic operations, including NOT, AND, and OR gates, as well as a half-adder. Experiments demonstrate that the logical output is robust against geometric deformation, while mechanical factors affect only if the computation can be executed. These results establish topology-defined computation as a model for information processing in textiles and other reconfigurable physical systems.

\end{abstract}

\maketitle

\section{introduction}

Mechanical computing refers to information processing carried out in the materials through the propagation of motions, deformations, and transitions between multiple structures~\cite{yasuda2021mechanical}.
While the electronic computer surpasses it in compactness and speed, research on mechanical computers has recently been gathering interest due to their independence on power supply, adaptability to soft robots, and resilience to perturbation from the environments such as heat, electromagnetic waves, and radiation.
Recent advances have enabled many kinds of sophisticated mechanical computers, based on origami~\cite{treml2018origami,meng2021bistability} and kirigami~\cite{yang2024mechanical} structures, soft pneumatic and fluidic devices~\cite{weaver2010static,mosadegh2011next,preston2019digital,song2021cmos,stanley2024high,drotman2021electronics}, and elastic materials~\cite{raney2016stable,el2022mechanical,byun2024integrated,kwakernaak2023counting,waheed2020boolean,jiao2023mechanical,mei2023memory,mousa2024parallel,michel2025model,bilal2017bistable,zhang2023mechanical,merkle1993two}. These systems exploit nonlinear mechanical response and engineered multistability to realize binary states, implement logic gates, and transmit signals. Advances in fabrication processes have enabled materials to be sculpted into complex shapes with finely tuned energy landscapes, realizing mechanical computers with intricate designed responses. Such devices have been explored for applications in autonomous soft machines~\cite{drotman2021electronics}, adaptive structures, and environment-responsive materials.

Despite these advances, mechanical computers face a major limitation. They rely on fine-tuned geometry, and hence are sensitive to perturbations. Because logic bits are typically encoded as local energy minima, and computation corresponds to transitions between these minima, the behavior of the devices depends on its potential energy landscape sensitively. Small variations in geometry, stiffness, or boundary conditions can modify the mechanical response and potentially affect computational performance. The reliance on finely engineered geometry not only complicates fabrication but also limits robustness under deformations, such as  stretching, bending, or shear. Developing a mechanical computer whose logic function is robust to geometric and mechanical fluctuations remains an open challenge.

Here, we show that topology defines computation whereas mechanics only executes it in a knitted fabric. Computation is realized through controlled unraveling designated by the topology of the textile. The logic behavior is defined by topology, instead of by geometry or energy landscapes.

Knitted fabrics are traditional crafts~\cite{emery1980principle,rutt1987history,seiler1994textiles}, and are attractive materials for topology-defined computation because they can be fabricated by both automated machines and manual techniques, and their manufacturing methods are well established.
Recently, it have been also developping as mechanical metamaterials with advenced understandings and techniques~\cite{poincloux2018geometry,gonzalez2024pulling,mahadevan2024knitting,knittel2020modelling,storck2022topology,singal2024programming,niu2025geometric,maziz2017knitting,abel2013hierarchical}, where stitch topology governs elasticity and nonlinear deformation~\cite{liu2017role,poincloux2018geometry,singal2024programming,ding2024unravelling,tajiri2025curling}. A distinct feature of knitting is laddering (Fig.~\ref{ladder_fig}), a chain of unraveling, which is a topologically governed propagation process~\cite{g565-3dyn} and recently gathering interests of physicists~\cite{liu2026composite,faulconnier2026laddering}. We construct transmission channels and logic gates whose function depends solely on the interlinking of a single continuous yarn. Inputs and outputs are encoded by the knitted/unknitted state of designated stitches, and computation is executed through unraveling which propagates over the textile. 

This architecture establishes a clear separation between logic and mechanics. The function is defined by yarn topology, while the material supplies the media and forces needed to execute unraveling. Because the state of each stitch is discretized by topology rather than by energetic minima separated by finite potential barrier, the logical behavior becomes less sensitive to geometric variations and mechanical perturbations.

\begin{figure}
    \centering
    \includegraphics[width=0.8\linewidth,bb=0 0 290 290]{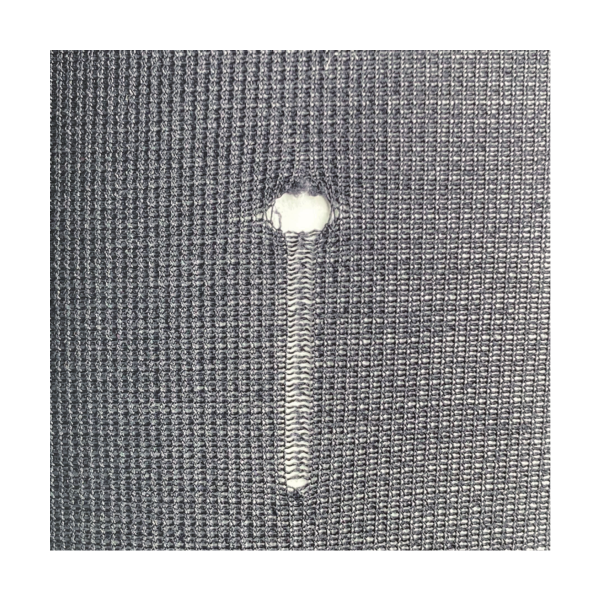}
    \caption{Laddering is caused by releasing a single loop, and it propagates over a knitted fabric by unraveling. The laddering is utilized as a information transmission in the present work.}
    \label{ladder_fig}
\end{figure}

\section{Discretized description of unraveling}
\label{theory}

Among variety of textiles, knitting exhibits a unique phenomenon known as laddering. Laddering is a chain of local failures that propagates through the fabric when a loop is released (Fig.~\ref{ladder_fig}), and it has been regarded as an undesirable breakdown in textile.

However, we utilize it for information processing by regarding it as an transmission of signals. The input and output information is represented by interlinking topology between loops in a fabric, and the information is transmitted and processed by unraveling.
Description and control of the laddering by topology are achieved in the periodic knitting by authors~\cite{g565-3dyn}.
We here establish the method for information processing by expanding it to knitting without periodicity. 

The unraveling is described as a dynamics on the lattice.
Our method is based on discretizing the states of the stitches into knitted and unknitted states and propagation of unknitted state.
The knittings are composed of cells aligned on the square lattice (Fig.~\ref{uc}).
Each cell, $l$ corresponds to a stitch in the textile and has an address specified by two integers. Cells have discretized physical states $\sigma_l\in\{K,U\}$, where $K$ denotes a knitted stitch and $U$ denotes an unknitted one (Fig.~\ref{uc}(a)). Note that we say that a cell is unraveled if all crossings within it can be removed only by continuous deformations without cutting or pasting the yarn as shown in the third step in the example in Fig.~\ref{uc}(b). In the example shown in Fig.~\ref{uc}(a), the cell $(1,-1)$ has a state $\sigma_{(1,-1)}=U$, and the next one has a state $\sigma_{(1,-2)}=K$.

We introduce the \textit{propagation} of a unknitted stitch. Propagation is defined as the transition of a stitch from the knitted state to the unknitted state induced by neighboring unknitted stitches. More precisely, a stitch is said to be released if there exists a continuous deformation of the yarn that transforms the stitch into the unknitted state while keeping the topology of the remaining textile unchanged, assuming that a specified set of neighboring cells is already unknitted. The update to change the released cell from $U$ to $K$ is called propagation.
In the example shown, the cell just below the cell with the state $U$ can also become $U$ via continuous deformations (Fig.~\ref{uc}(b)), which is what we call propagation.

It is an inherent property of knitting that each stitch has two states, and the mapping from physical states to a logic value is defined based on this two-state description. The spatial discretization into unit cells and state discretization into $U$ or $K$ enables unified description and control of the unraveling for many kinds of knittings.
Laddering proceeds through the unraveling of the stitches and the transitions of the cells from $K$ to $U$. Laddering is complex and difficult to describe as dynamics of yarn segments, but it can be easily handled through discretization~\cite{g565-3dyn}. 

We describe the programmed way of unraveling as dependencies among the cells. For each cell $l$, we specify a \textit{propagation set}
\begin{equation}
S_l=\{R_{l,1},R_{l,2},\ldots\},
\end{equation}
where each prerequisite set $R_{l,k}$ lists the cells that must already be in state $U$ before cell $i$ is allowed to transition from $K$ to $U$. Each element of $R_{l,k}$ is represented as a relative address from the cell $l$. By making a cell with a topology with desired propagation set, the way of unraveling can be controlled.

By descretizing the time into steps, $t$, and considering the unraveling process as repeated applications of the update operation, we can formally describe the unraveling process.
Letting $\sigma_l^{(t)}\in\{K,U\}$ denote the state of the cell $l$ at step $t$ and $V=\{1,\dots,N\}$ be the set of cells, the update is that $\sigma_l^{(t+1)}=U$ if $\sigma_l^{(t)}=U$ or there exists $k$ with $R_{l,k}\subseteq\{j\in V|\sigma_{l+j}^{(t)}=U\}$, and otherwise, $\sigma_l^{(t+1)}=K$. Here, we denote by the sum between cells the sum of each coordinate.

\begin{figure}
    \centering
    \includegraphics[width=1.0\linewidth,bb=0 0 260 330]{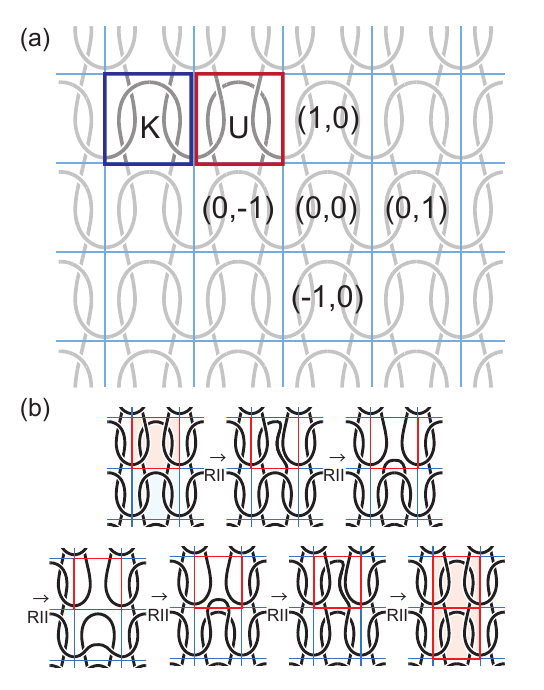}
    \caption{(a) A schematic of a knitted logic curcuit. Each cell has an address specified with an ordered pair of integers. In this example, a pair of cells $((1,-2),(1,-1))$ represents logic value of 1, because $\sigma_{(1,-2)}=K$ and $\sigma_{(1,-1)}=U$. (b) Continuous deformation can remove all crossings within the unknitted cell, and further deformation can duplicate the unknitted cell, which we call propagation.}
    \label{uc}
\end{figure}

Following this description, knitted fabrics are regarded as models on the lattice, such as cellular automata or kinetically constrained models. For example, all the stitches in jersey knitting has a propagation set $\{\{(1,0)\}\}$, and simple crochet has a propagation set $\{\{(0,1)\}\}$. As a models on the lattice, both of them correspond to the East model~\cite{jackle1991hierarchically,sollich1999glassy}.

Given an initial configuration $\sigma^{(0)}$ encoding the input bits, successive applications of the update define a propagation process.
The outcome is read from the final states at designated output cells. 
Geometric parameters such as tension or elasticity may affect the rate and ease of transitions, but the final state is prescribed solely by the propagation sets $S_i$, i.e., by the yarn topology.

\section{Uniqueness of the computation result}
\label{app_uni}

The information is input and output stitch-wise, and the transmission and processing is executed through unraveling stitch by stitch. The input and output stitches are assigned at first, and some input stitches are unknitted to input bits and execute the computation. By observing which output stitches have been unknitted after repeted update, the output is read out. We here show that the result is unique and independent of the order of the unraveling events.

This computation process can be regarded as a dynamics on the lattice as well as monotone cellular automata~\cite{morris2017bootstrap} or boolean networks~\cite{bollobas2015monotone}.
Given an initial configuration $\sigma^{(0)}$ encoding the input bits, successive applications of the update define a propagation process.

Each cell $i\in V$ has a physical state $\sigma_i\in\{K,U\}$ with $K<U$.
The configuration space $\mathcal{X}:=\{K,U\}^N$ is a set of states with the partial order,
\begin{equation}
\sigma\le\tau\, :\Leftrightarrow\, ^\forall i\in V,\; \sigma_i\le \tau_i,
\end{equation}
where $\sigma$, $\tau\in\mathcal{X}$.
Thus $\sigma\le\tau$ means that $\tau$ has all cells unknitted that are unknitted in $\sigma$, and possibly more are unknitted.

For each cell $i$, the local propagation rule is described as a set of prerequisite sets
\begin{equation}
S_i=\{R_{i,1},\dots,R_{i,m_i}\},\qquad R_{i,k}\subseteq V,
\end{equation}
meaning that $i$ may change from $K$ to $U$ if all cells in some $R_{i,k}$ are $U$.
Define the global update operator $\Phi:\mathcal{X}\to\mathcal{X}$ cellwise by
\begin{equation}
\Phi_i(\sigma)=
\begin{cases}
U,\;\; \text{if }\sigma_i=U\text{ or }\sigma_i\text{ is released},\\
K,\;\; \text{otherwise,}
\end{cases}
\end{equation}
where we call $\sigma$ is released if the following condition is satisfied:
\begin{equation}
    ^\exists k\in\{1,\dots,m_i\},\; ^\forall j\in R_{i,k}: \sigma_j=U.
\end{equation}
Equivalently, $\Phi(\sigma)=\sigma\vee R(\sigma)$, where $R_i(\sigma)=U$ if and only if $\sigma_i$ is released. The join $\vee$ denotes cellwise join in the finite lattice $\mathcal{X}$, which join means a cell is $U$ if eather of them is $U$ and $K$ otherwise.

\begin{lemma}
\label{tancho}
The map $\Phi$ is monotone and inflationary:
\begin{equation}
\sigma\le\tau\;\Rightarrow\;\Phi(\sigma)\le \Phi(\tau),
\;
\sigma\le \Phi(\sigma)\quad(^\forall \sigma\in\mathcal{X}).
\end{equation}
\end{lemma}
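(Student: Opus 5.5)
Both properties follow directly from the definition $\Phi(\sigma)=\sigma\vee R(\sigma)$, so the plan is to treat them separately and argue cellwise. For inflation, fix $\sigma\in\mathcal{X}$ and a cell $i\in V$. If $\sigma_i=K$, then $\sigma_i\le\Phi_i(\sigma)$ holds trivially, since $K$ is the least element of $\{K,U\}$. If $\sigma_i=U$, the first clause in the definition of $\Phi_i$ gives $\Phi_i(\sigma)=U$. In both cases $\sigma_i\le\Phi_i(\sigma)$, and by the definition of the partial order on $\mathcal{X}$ this yields $\sigma\le\Phi(\sigma)$. Equivalently, $\sigma\le\sigma\vee R(\sigma)$ by the defining property of the join.

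For monotonicity, I will first show that the release map $R$ is monotone, and then use that the join is monotone in each argument. Suppose $\sigma\le\tau$ and that $R_i(\sigma)=U$. Then there is some $k\in\{1,\dots,m_i\}$ with $\sigma_j=U$ for all $j\in R_{i,k}$. Since $\sigma_j\le\tau_j$ and $U$ is maximal, $\tau_j=U$ for all these $j$. Hence the same index $k$ witnesses that cell $i$ is released in $\tau$, so $R_i(\tau)=U$. This gives $R(\sigma)\le R(\tau)$.

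Combining $\sigma\le\tau$ with $R(\sigma)\le R(\tau)$, we get cellwise $\sigma_i\vee R_i(\sigma)\le\tau_i\vee R_i(\tau)$. If the left side equals $U$, then either $\sigma_i=U$, which forces $\tau_i=U$, or $R_i(\sigma)=U$, which forces $R_i(\tau)=U$; in either case the right side is also $U$. Thus $\Phi(\sigma)\le\Phi(\tau)$.

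There is no real obstacle here. The only point that needs care is that the release condition is a positive condition: a disjunction over $k$ of conjunctions of requirements that certain cells be $U$, with no requirement that any cell be $K$. This is exactly what makes $R$ monotone. I would state this explicitly, since it is the structural reason the result holds and what later makes the final configuration independent of the order of updates.
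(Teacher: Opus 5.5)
Your proof is correct and takes the same route as the paper: you show the release map $R$ is monotone because the same witness index $k$ works for $\tau$, then use $\Phi(\sigma)=\sigma\vee R(\sigma)$ with monotonicity of the cellwise join, and read off inflation from $\sigma\le\sigma\vee R(\sigma)$. Your version only spells out the cellwise case analysis that the paper leaves implicit.
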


\begin{proof}
If $\sigma\le\tau$ and $\sigma_i$ is released, then $\tau_j=U$ for those $j$ such that $\sigma_j=U$, hence $\tau_i$ is released as well.
Thus $R(\sigma)\le R(\tau)$ componentwise.
Since $\Phi(\cdot)=\cdot\vee R(\cdot)$ and join is monotone, $\Phi$ is monotone.
Inflationarity follows from $\sigma\le \sigma\vee R(\sigma)=\Phi(\sigma)$.
\end{proof}

Consider the iteration starting from a given initial configuration $\sigma^{(0)}$:
\begin{equation}
\label{iteration}
\sigma^{(t+1)}=\Phi\bigl(\sigma^{(t)}\bigr),\qquad t=0,1,2,\dots
\end{equation}

\begin{lemma}
\label{fixed}
The sequence $\sigma^{(0)}\le \sigma^{(1)}\le \sigma^{(2)}\le\cdots$ defined by Eq. \ref{iteration} goes to a state $\sigma^*$ satisfying $\Phi(\sigma^*)=\sigma^*$ in a finite number of update steps.
Moreover, $\sigma^*$ is the least fixed point of $\Phi$ for $\sigma^{(0)}$, \emph{i.e.}, 
for any $\tau$ with $\sigma^{(0)}\le\tau=\Phi(\tau)$, $\sigma^*\le\tau$.
\end{lemma}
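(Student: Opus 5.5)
The plan is to prove the two claims in turn, relying only on Lemma~\ref{tancho} and the finiteness of $\mathcal{X}=\{K,U\}^N$.

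\emph{Step 1: the sequence is increasing.} Inflationarity in Lemma~\ref{tancho} gives $\sigma^{(t)}\le\Phi(\sigma^{(t)})=\sigma^{(t+1)}$ for every $t$, so the chain is non-decreasing.

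\emph{Step 2: the sequence stabilizes.} To bound the number of steps, count unknitted cells, $u(\sigma)=|\{i\in V:\sigma_i=U\}|$. Whenever $\sigma^{(t+1)}\neq\sigma^{(t)}$, monotonicity of the chain forces $u(\sigma^{(t+1)})\ge u(\sigma^{(t)})+1$. Since $u\le N$, there must be a first $T\le N$ with $\sigma^{(T+1)}=\sigma^{(T)}$. Setting $\sigma^*:=\sigma^{(T)}$ gives $\Phi(\sigma^*)=\sigma^*$. Because the iteration is deterministic, $\sigma^{(t)}=\sigma^*$ for all $t\ge T$, so the sequence reaches $\sigma^*$ after at most $N$ updates.

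\emph{Step 3: minimality.} Let $\tau$ satisfy $\sigma^{(0)}\le\tau=\Phi(\tau)$. I will show $\sigma^{(t)}\le\tau$ for all $t$ by induction. The base case $t=0$ is the hypothesis. If $\sigma^{(t)}\le\tau$, then monotonicity of $\Phi$ (Lemma~\ref{tancho}) gives $\sigma^{(t+1)}=\Phi(\sigma^{(t)})\le\Phi(\tau)=\tau$. Taking $t=T$ yields $\sigma^*\le\tau$. Since $\sigma^*$ is itself a fixed point above $\sigma^{(0)}$, it is the least such fixed point, and antisymmetry of $\le$ makes it unique. This is the fact later used for order-independence.

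No step is a genuine obstacle. The only point needing care is that strict increase in a finite poset forces termination. Counting $U$ cells makes this explicit and also gives the bound $T\le N$, which is worth stating for the later discussion of computation time.
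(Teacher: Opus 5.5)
Your proof is correct and matches the paper's argument: inflationarity makes the chain nondecreasing, each strict increase flips at least one cell so the chain stabilizes within $N$ steps, and induction using monotonicity of $\Phi$ gives $\sigma^{(t)}\le\tau$ for all $t$, hence $\sigma^*\le\tau$. Your counting function $u$ and the explicit bound $T\le N$ simply make precise what the paper says in one line.
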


\begin{proof}
By Lemma~\ref{tancho}, the sequence is nondecreasing.
Each strict increase flips at least one cell from $K$ to $U$, which can happen at most $N$ times.
Hence the chain stabilizes to a $\sigma^*$ with $\Phi(\sigma^*)=\sigma^*$ in a finite number of steps.
For leastness, if $\tau$ is any fixed point with $\sigma^{(0)}\le\tau$, then by monotonicity $\sigma^{(t)}\le \tau$ for all $t$, hence $\sigma^*\le\tau$.
\end{proof}

A single physical step of unknitting is next modelled in a formal way. It may update any subset of cells that are released at that moment, possibly just one, or several.
Let us call a one-step map $U:\mathcal{X}\to\mathcal{X}$ \emph{possible} if for all $\sigma\in\mathcal{X}$ it satisfies:
\begin{itemize}
    \item[(i)]$\sigma\le U(\sigma)\le\Phi(\sigma)$,
    \item[(ii)] If $\Phi(\sigma)>\sigma$, then $U(\sigma)>\sigma$.
\end{itemize}
The left inequality in (i) comes from the irreversibility (from $K$ to $U$ only), and the right inequality encodes that no step can exceed the parallel update. The condition (ii) implies that unraveling proceeds whenever any cell is released.
A possibly varying sequence of possible steps $\{U_t\}$, and equivalently the sequence of states $\tilde\sigma^{(t)}$, define a physical trajectory
\begin{equation}
\label{keiro}
\tilde\sigma^{(t+1)} = U_t\left(\tilde\sigma^{(t)}\right),\quad\tilde\sigma^{(0)}=\sigma^{(0)}.
\end{equation}

\begin{lemma}
\label{samefix}
Let $U:\mathcal{X}\to\mathcal{X}$ be a possible one-step map, then for every state $\sigma\in\mathcal{X}$,
\begin{equation}
U(\sigma)=\sigma \Leftrightarrow \Phi(\sigma)=\sigma.
\end{equation}
\end{lemma}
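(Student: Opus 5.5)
The equivalence follows directly from the two defining conditions of a possible one-step map, together with the inflationarity of $\Phi$ from Lemma~\ref{tancho}. I will prove each direction separately, working with the strict order $\tau>\sigma$, meaning $\tau\ge\sigma$ and $\tau\neq\sigma$.

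For ($\Leftarrow$), suppose $\Phi(\sigma)=\sigma$. Condition (i) gives the sandwich
\begin{equation}
\sigma\le U(\sigma)\le\Phi(\sigma)=\sigma .
\end{equation}
The order on $\mathcal{X}$ is cellwise, and antisymmetry holds cellwise on $\{K,U\}$. Hence $U(\sigma)=\sigma$.

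For ($\Rightarrow$), I argue by contraposition. Suppose $\Phi(\sigma)\neq\sigma$. By Lemma~\ref{tancho}, $\sigma\le\Phi(\sigma)$, so in fact $\Phi(\sigma)>\sigma$. Condition (ii) then yields $U(\sigma)>\sigma$, and in particular $U(\sigma)\neq\sigma$. Equivalently, $U(\sigma)=\sigma$ implies $\Phi(\sigma)=\sigma$.

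There is no real obstacle here. The one point needing care is the step from $\Phi(\sigma)\neq\sigma$ to $\Phi(\sigma)>\sigma$, which is needed to trigger condition (ii). This uses inflationarity, so I will cite Lemma~\ref{tancho} explicitly at that step.

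I also note how the result will be used. Combined with Lemma~\ref{fixed}, it shows that every physical trajectory (\ref{keiro}) can stall only at a fixed point of $\Phi$. This sets up the later argument that every such trajectory ends at the same least fixed point $\sigma^*$.
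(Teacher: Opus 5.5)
Your proposal is correct and follows essentially the same route as the paper: the sandwich $\sigma\le U(\sigma)\le\Phi(\sigma)=\sigma$ from condition (i) gives one direction, and condition (ii) gives the other by contradiction (equivalently, by your contraposition). You also make explicit the step from $\Phi(\sigma)\neq\sigma$ to $\Phi(\sigma)>\sigma$, which the paper leaves implicit; it follows from Lemma~\ref{tancho} as you say, or directly from (i) since $\sigma\le U(\sigma)\le\Phi(\sigma)$.
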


\begin{proof}
Suppose $U(\sigma)=\sigma$.  
If $\Phi(\sigma)>\sigma$, the assumption (ii) require $U(\sigma)>\sigma$, contradicting $U(\sigma)=\sigma$.  
Hence $\Phi(\sigma)=\sigma$.

Conversely, if $\Phi(\sigma)=\sigma$, then $\sigma\le U(\sigma)\le \Phi(\sigma)=\sigma$, so $U(\sigma)=\sigma$.
\end{proof}

\begin{lemma}
For every possible trajectory~(\ref{keiro}), the sequence
$\tilde\sigma^{(0)}\le \tilde\sigma^{(1)}\le\cdots$ stabilizes in a finite number of steps to the same fixed point $\sigma^*$ as in Lemma~\ref{fixed}, independently of the choice of possible steps.
\end{lemma}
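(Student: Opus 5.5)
We prove the statement by sandwiching every possible trajectory between the initial configuration and the parallel iteration of Eq.~(\ref{iteration}), and then using the leastness of $\sigma^*$ from Lemma~\ref{fixed}.

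\textbf{Step 1: termination.} By condition (i), $\tilde\sigma^{(t)}\le\tilde\sigma^{(t+1)}$, so the trajectory is nondecreasing in $\mathcal{X}$. Each strict increase turns at least one cell from $K$ to $U$, and this can happen at most $N$ times. Hence there is a finite $T$ with $\tilde\sigma^{(T+1)}=\tilde\sigma^{(T)}$. At that point $U_T(\tilde\sigma^{(T)})=\tilde\sigma^{(T)}$, and Lemma~\ref{samefix} gives $\Phi(\tilde\sigma^{(T)})=\tilde\sigma^{(T)}$. So the limit $\tilde\sigma^*:=\tilde\sigma^{(T)}$ is a fixed point of $\Phi$. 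Every later step is also stationary, since a possible map fixes every fixed point of $\Phi$. Because $\sigma^{(0)}\le\tilde\sigma^*$, Lemma~\ref{fixed} then gives $\sigma^*\le\tilde\sigma^*$.

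\textbf{Step 2: the reverse inequality.} We show by induction on $t$ that $\tilde\sigma^{(t)}\le\sigma^{(t)}$ for all $t$.
\begin{itemize}
\item[] The base case $t=0$ holds with equality.
\item[] For the inductive step, assume $\tilde\sigma^{(t)}\le\sigma^{(t)}$. Then (i) and Lemma~\ref{tancho} give
\begin{equation}
\tilde\sigma^{(t+1)}=U_t(\tilde\sigma^{(t)})\le\Phi(\tilde\sigma^{(t)})\le\Phi(\sigma^{(t)})=\sigma^{(t+1)}.
\end{equation}
\end{itemize}
Since $\sigma^{(t)}=\sigma^*$ for all large $t$, it follows that $\tilde\sigma^*\le\sigma^*$.

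An alternative for Step 2 is to show $\tilde\sigma^{(t)}\le\sigma^*$ directly by induction, using $U_t(\sigma)\le\Phi(\sigma)\le\Phi(\sigma^*)=\sigma^*$ for $\sigma\le\sigma^*$. This avoids the parallel sequence altogether.

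Combining the two inequalities gives $\tilde\sigma^*=\sigma^*$. This holds for every sequence of possible steps, even if the maps $U_t$ vary with $t$.

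The main point needing care is Step 1: the trajectory must be shown to stop at a genuine fixed point of $\Phi$, not just somewhere $U_t$ happens to stall. This is exactly what condition (ii), through Lemma~\ref{samefix}, guarantees, and it holds for each $U_t$ separately even when the maps change from step to step. The rest is monotonicity bookkeeping.
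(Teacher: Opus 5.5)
Your proof is correct and follows essentially the same route as the paper. Termination comes from at most $N$ strict increases, the stalled state is a $\Phi$-fixed point by Lemma~\ref{samefix} (hence $\sigma^*\le\tilde\sigma^*$ by leastness), and the bound $\tilde\sigma^{(t)}\le\Phi^t(\sigma^{(0)})$ follows by induction from (i) and monotonicity; your explicit remark that a possible map fixes every $\Phi$-fixed point, so the trajectory stays put once it stalls, fills in a step the paper leaves implicit.
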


\begin{proof}
From condition (i), $\tilde\sigma^{(t)}\le \tilde\sigma^{(t+1)}$, and whenever $\Phi(\tilde\sigma^{(t)})>\tilde\sigma^{(t)}$ at least one cell flips, so the chain stabilizes in at most $N$ strict increases.
By induction on $t$ and and monotonicity of $\Phi$,
\begin{equation}
    \tilde\sigma^{(t+1)}=U_t(\tilde\sigma^{(t)})\le \Phi\bigl(\tilde\sigma^{(t)}\bigr)\le \Phi^{t+1}\bigl(\sigma^{(0)}\bigr),
\end{equation}
hence $\tilde\sigma^{(t)}\le \Phi^t(\sigma^{(0)})$ for all $t$, and $\lim_{t\rightarrow\infty} \tilde\sigma^{(t)}\le \sigma^*$.

Moreover, if $\tilde\sigma^{(t)}=\tilde\sigma^{(t+1)}$, then $U_t\bigl(\tilde\sigma^{(t)}\bigr)=\tilde\sigma^{(t)}$. By Lemma~\ref{samefix}, $\Phi(\tilde\sigma^{(t)})=\tilde\sigma^{(t)}$.
Since the limit $\sigma'$ of the possible chain is a fixed point of $\Phi$ with $\sigma^{(0)}\le\sigma'$, Lemma~\ref{fixed} yields $\sigma^*\le\sigma'$. Together with the upper bound $\sigma'\le\sigma^*$, $\sigma'=\sigma^*$.
This limit is independent of the sequence $\{U_t\}$.
\end{proof}

\begin{theorem}
Given the initial configuration $\sigma^{(0)}$ (encoding the inputs via the two-cell convention), any physically possible unraveling process, regardless of the order or simultaneity of unraveling events, converges in finitely many steps to the same terminal configuration $\sigma^*$ characterized by $\Phi(\sigma^*)=\sigma^*$ and $\sigma^{(0)}\le\sigma^*$.
The logical output, read from which side of each output stitch finally unravels, is therefore unique and independent of update order.
\end{theorem}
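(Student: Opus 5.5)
The theorem follows by assembling the preceding lemmas; the only real work is to match the physical description in the statement to the formal objects already defined. I would first fix the formalization. A ``physically possible unraveling process'' is a trajectory~(\ref{keiro}) generated by possible one-step maps $U_t$. Each event, or each set of simultaneous events, unknits only cells that are currently released, so $U_t(\sigma)\le\Phi(\sigma)$. No cell is ever reknitted, so $\sigma\le U_t(\sigma)$. Unraveling does not stall while some cell is released, which is condition (ii). Different orders and different degrees of simultaneity are then simply different choices of the sequence $\{U_t\}$.

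For existence and characterization of the limit, I would apply Lemma~\ref{fixed} to the synchronous iteration~(\ref{iteration}). This gives $\sigma^*$ with $\Phi(\sigma^*)=\sigma^*$, reached in at most $N$ strict increases. Since $\Phi$ is inflationary, $\sigma^{(0)}\le\sigma^*$. Lemma~\ref{fixed} also shows that $\sigma^*$ is the least such fixed point, so it is uniquely determined by $\sigma^{(0)}$ and the propagation sets $S_i$ alone.

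For convergence of an arbitrary process, I would invoke the preceding lemma on possible trajectories. Its argument has three parts:
\begin{itemize}
\item Monotonicity, Lemma~\ref{tancho}, gives the upper bound $\tilde\sigma^{(t)}\le\Phi^t(\sigma^{(0)})\le\sigma^*$.
\item Each non-stationary step flips at least one cell in the finite set $V$, so the trajectory stabilizes after finitely many steps.
\item Lemma~\ref{samefix} shows that the stationary state is a fixed point of $\Phi$ lying above $\sigma^{(0)}$, so leastness gives the matching lower bound.
\end{itemize}
Hence every trajectory terminates at exactly $\sigma^*$, independent of $\{U_t\}$.

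The logical claim is then immediate. The output is a deterministic function of the terminal states $\sigma^*_l$ at the designated output cells, namely which cell of each output pair is $U$ under the two-cell convention. Since $\sigma^*$ is unique, so is the readout.

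The main obstacle is the modelling step, not the order theory. I must justify that every physical event sequence is captured by some sequence of possible maps. Simultaneous releases correspond to taking $U_t(\sigma)$ equal to an arbitrary intermediate join, which is allowed because condition (i) only requires $\sigma\le U_t(\sigma)\le\Phi(\sigma)$. The substantive physical assumption is the ``no stalling'' condition (ii). I would state it explicitly as a hypothesis, namely that the process continues as long as some cell is released. Without it, a process could halt at a non-fixed state, and only the inequality $\tilde\sigma\le\sigma^*$ would survive.
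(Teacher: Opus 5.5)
Your proposal is correct and follows the paper's own route: the theorem is obtained by combining the monotone and inflationary property of $\Phi$, the least-fixed-point characterization of the synchronous limit $\sigma^*$, the equivalence of fixed points of $U$ and $\Phi$ under condition (ii), and the sandwich $\sigma^*\le\lim_t\tilde\sigma^{(t)}\le\sigma^*$ for any possible trajectory. Your remark that the no-stalling condition (ii) is the substantive physical hypothesis, and that without it only $\tilde\sigma\le\sigma^*$ survives, is apt; it matches the paper's experimental observation that insufficient tension gives incomplete rather than incorrect outputs.
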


\section{Propagation sets for logic}
\label{propagation_sets_for_logic}

Propagation sets provide a way to construct logic circuits. We here demonstrate the construction of logic bit, transmission channels, and logic gates.

\subsection{Representation of bits}

A logic bit is represented by an ordered pair of adjacent cells. The right and left cells are denoted by subscripts, $l$ and $r$, respectively.
If $(\sigma_l,\sigma_r)=(U,K)$, it represents 0, and if $(\sigma_l,\sigma_r)=(K,U)$, it represents 1. The configuration $(K,K)$ is used as a neutral state. An example is shown as Fig.~\ref{gates}(a). The examples throughout this paper is based on stockinette knitting although this computer can be constructed on other types of knitted or crocheted fabrics, because it is the easiest to make and the simplest to show.
With this two-cell convention, every input and output has a unique physical realization, and the output is read from which side unknitted in the pair.

\begin{figure*}
    \centering
    \includegraphics[width=1.0\linewidth,bb=0 0 540 250]{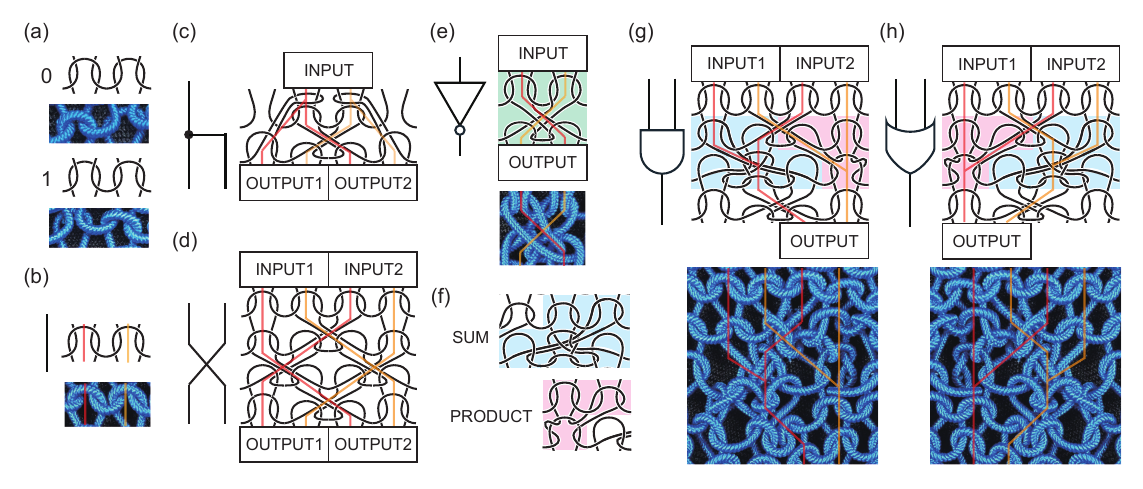}
    \caption{Each logic components are shown as a set of a diagram in a two-dimensional projection and an experimental photo. (a) Logic bit 0 and 1, (b) channel, (c) FANOUT, (d) EXCHANGE, (e) NOT gate, (f) sum and product modules, (g) AND gate, and (h) OR gate, which is a mirror image of AND gate. In this figure, the yarn segments are fixed by pins to prevent the defect propagation and improve visibility.}
    \label{gates}
\end{figure*}

\subsection{Transmission}

The information is transmitted through chain of unraveling in the knitted fabrics.
Reliable transmission and routing of signals are essential for scalable computation.
In the knitted computer, transmission can be realized by the topology of the yarn.

Three basic transmission elements are used in the logic circuit: the channel, FANOUT, and EXCHANGE, as illustrated in Fig.~\ref{gates}(b-d).
Each element is defined in terms of the relationships between the left and right cells of the input and output bits. The functions of the elements are described by the propagation set introduced in the previous section.

A channel simply transmit the information.
A single input bit, $A$ transmits its logic state to a single downstream bit, $C$. 
The propagation sets are
\begin{equation*}
S_{C_l} = \big\{\{A_l\}\big\}, S_{C_r} = \big\{\{A_r\}\big\},
\end{equation*}
so that the unraveling of $A$ directly triggers that of $C$.
This corresponds to a straight transmission. An example of this element with $S_{C_l}=\big\{\{(1,0)\}\big\}$ and $S_{C_r}=\big\{\{(1,0)\}\big\}$ is shown in Fig.~\ref{gates}(b).

A FANOUT duplicate the input signal and transmit it to multiple outputs.
The input bit $A$ transmits its state simultaneously to two downstream bits $C$ and $D$. Each receives the same logic value.
The propagation sets are
\begin{eqnarray*}
S_{C_l} = \big\{\{A_l\}\big\}, S_{C_r} = \big\{\{A_r\}\big\},\\ S_{D_l} = \big\{\{A_l\}\big\}, S_{D_r} = \big\{\{A_r\}\big\}.
\end{eqnarray*}
An example with $S_{C_l}=\big\{\{(3,1)\}\big\}$, $S_{C_r}=\big\{\{(3,0)\}\big\}$, $S_{D_l}=\big\{\{(3,0)\}\big\}$, and $S_{D_r}=\big\{\{(3,-1)\}\big\}$ is shown in Fig.~\ref{gates}(c).
The unraveling initiated at $A$ thus branches into two directions, realizing FANOUT.

By EXCHANGE, two input bits, $A$ and $B$ exchange their positions as they propagate downstream, producing outputs, $D$ and $C$, respectively:
\begin{eqnarray*}
S_{C_l} = \big\{\{B_l\}\big\}, S_{C_r} = \big\{\{B_r\}\big\},\\
S_{D_l} = \big\{\{A_l\}\big\}, S_{D_r} = \big\{\{A_r\}\big\}.
\end{eqnarray*}
An example with $S_{C_l}=\big\{\{(5,2)\}\big\}$, $S_{C_r}=\big\{\{(5,2)\}\big\}$, $S_{D_l}=\big\{\{(5,-2)\}\big\}$, and $S_{D_r}=\big\{\{(5,-2)\}\big\}$ is shown in Fig.~\ref{gates}(d).
This operation corresponds to a crossing of two channels, in which two unraveling paths cross so that each of them passes behind or in front of the other without interaction.

These three elements provide all necessary routing functions for constructing complex circuits.

\subsection{Processing}
We construct NOT, AND, and OR gates, which form a functionally complete set, to show that any logic function can be realized by a single yarn. 
A NOT gate is realized by interchanging the left and right order of a bit pair, so that the unraveling that would designate 0 becomes the unraveling that designates 1, and vice versa. Accordingly, for input $A=(A_l,A_r)$ and output $C=(C_l,C_r)$, NOT gate is yielded by setting $S_{C_l}=\{\{A_r\}\}$ and  $S_{C_r}=\{\{A_l\}\}$. See an example with $S_{C_l}=\big\{\{(1,1)\}\big\}$ and $S_{C_r}=\big\{\{(1,-1)\}\big\}$ in Fig.~\ref{gates}(e).

We next introduce AND and OR gates. They are realized by wiring the left and right cells of input pairs to the left and right cells of an output pair through elementary modules. We first introduce the two modules to simplify the description of the gates.
Both the modules have two input cells and one output cell. 
The first one is \emph{sum} module. 
For a given input to the two upstream cells, the downstream cell is permitted to unknit if either of the two upstream cells has unknitted. 
If $a$ and $b$ denote the two upstream cells and $c$ the downstream cell, the propagation set is
\begin{equation}
S_c=\big\{\{a\},\{b\}\big\}.
\end{equation}
This module transmits the presence of unraveling from either input.

The second one is \emph{product} module.
The downstream cell $c$ remains knitted until both upstream cells have unknitted:
\begin{equation}
S_c=\big\{\{a,b\}\big\}.
\end{equation}
This module requires concurrence of unknitting at both inputs.
An example for each modules with $S=\big\{\{(1,0)\},\{(1,1)\}\big\}$ and $S=\big\{\{(1,0), (1,1)\}\big\}$ are shown in Fig.~\ref{gates}(f).
Using these modules, two-input Boolean gates on bit pairs, AND and OR, are assembled as follows.

AND gate utilizes both modules. For inputs $A=(A_l,A_r)$ and $B=(B_l,B_r)$ and output $C=(C_l,C_r)$,
the propagation sets are set as $S_{C_l}=\big\{\{A_l\},\{B_l\}\big\}$ and $S_{C_r}=\big\{\{A_r,B_r\}\big\}$, which means one 
connects the left cells through a product module so that $C_l$ unknits only if both $A_l$ and $B_l$ have unknitted, and
connects the right cells through a sum module so that $C_r$ unravels if either $A_r$ or $B_r$ has unknitted. See an example with $S_{C_l}=\big\{\{(5,-2)\},\{(5,0)\}\big\}$ and $S_{C_r}=\big\{\{(5,-2),(5,0)\}\big\}$ in Fig.~\ref{gates}(g).
Under the two-cell readout rule, this mapping realizes $C = A \land B$.
Interchanging $C_l$ and $C_r$ yields a NAND.

Reversing the assignment used for AND yields OR gate.
The propagation sets are set as $S_{C_l}=\big\{\{A_l\},\{B_l\}\big\}$ and $S_{C_l}=\big\{\{A_r,B_r\}\big\}$. It implies that the mirror image of OR yields AND gate. See an example with $S_{C_l}=\big\{\{(5,0),(5,2)\}\big\}$ and $S_{C_r}=\big\{\{(5,0)\},\{(5,2)\}\big\}$ in Fig.~\ref{gates}(h).
The pairwise readout then realizes $C = A \lor B$.

In this section, we showed the propagation sets to construct a computer in a knitted fabrics. It can be interpreted as site-dependent monotone Boolean functions. Unlike conventional monotone Boolean networks, these functions are not prescribed abstractly but emerge from the topology within each cell. We are able to experimentally implement the dynamics on the networks.

\begin{figure*}
    \centering
    \includegraphics[width=1.0\linewidth,bb=0 0 520 340]{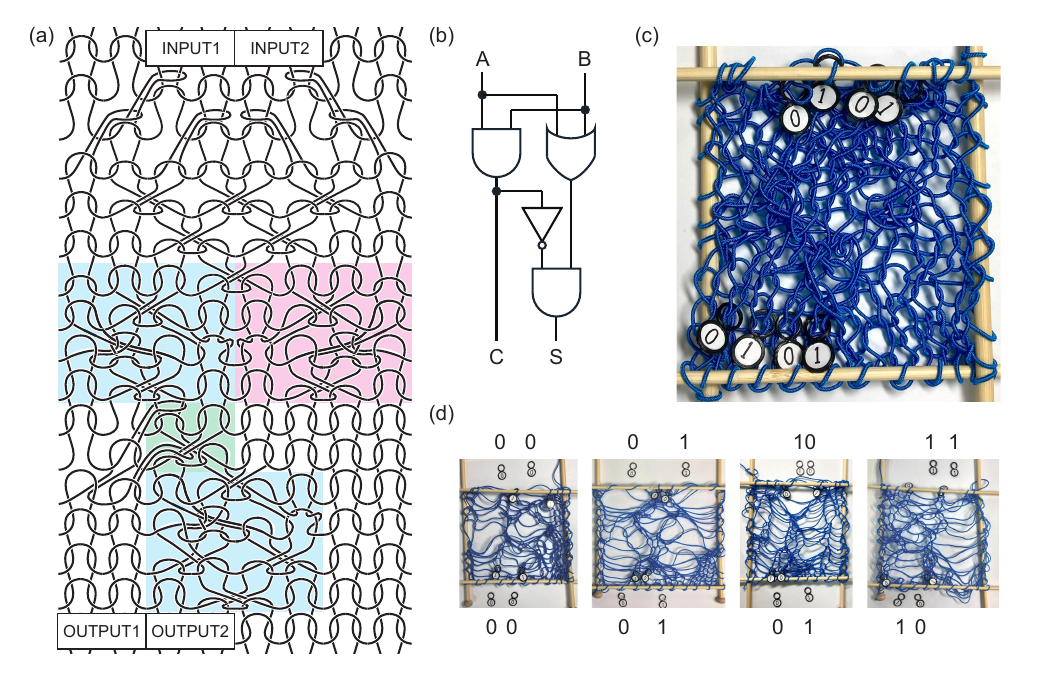}
    \caption{(a) Diagram of knitted half-adder realized in our experiment. It adds two input bits, A and B, to produce sum bit, S and carry bit, C. The logic gates of NOT, AND, and OR are highlighted in green, blue, and red, respectively. (b) The circuit diagram of the half-adder illustrated in (a). (c) Experimental realization of the half-adder. The tags of the input and output bits are attatched to the yarn loop for a visibility. The tags correspond to the input and output logic values are released from the computer. (d) The half-adders after the computations for all four inputs, $A=0$ and $B=0$, $A=0$ and $B=1$, $A=1$ and $B=0$, and $A=1$ and $B=1$. All outputs are confirmed to be correct.}
    \label{HA}
\end{figure*}

\section{Experimental verification of logic gates}

\subsection{Fundamental logic gates}

The elementary logic gates NOT, AND, and OR were realized experimentally, as shown in Fig.~\ref{gates}. For all tested input combinations, the resulting outputs agreed with the corresponding truth tables. Although unraveling events occurred asynchronously, the final state was independent of their microscopic order, indicating that the output is determined by the propagation rules rather than by the details of the dynamics. These results demonstrate that the fundamental modules correctly implement the logic operations required for computation.

\subsection{Half-adder circuit}

To verify that correct logical behavior is preserved under gate composition and our construction is scalable, we implemented a half-adder circuit constructed from the elementary gates. 
The design, shown schematically in Fig.~\ref{HA}(a,b), consisted of an AND gate for the carry output and an XOR pathway, which is built from an AND, OR, and NOT gate, for the sum output. The entire structure was knitted from a single yarn (Fig.~\ref{HA}(c)).

For each of the four input combinations $(A,B)\in\{0,1\}^2$, computation was initiated by unknitting the input stitches corresponding to the specified bits.
Propagation proceeded through the two parallel pathways delivering signals to the carry and sum outputs.
The final outputs were read by observing which stitch unraveled.
Small plastic rings were attached to the stitches representing input and output bits to aid visibility.
These markers did not participate in the computation, and they served only as visual tags for identifying the logic state of bit pairs.
The result was easily read by observing which tagged rings detached from the output pairs.
Figure~\ref{HA}(d) presents the experimentally obtained final configurations for all inputs.
In all four cases, the outputs matched the expected truth table.
This multi-gate circuit demonstrates that topologically defined logic remains consistent when gates are combined, and that the physical unraveling executes the logical dependencies determined by the topology.

In addition to universal logic, the same method can realize monotone Boolean functions using a more compact single-cell representation of logic bits. Because such circuits lose universality instead of compactness, we present them in Appendix~\ref{monotone_section}.

\subsection{Robustness under geometric deformation}

The experiment was done also after twisting, stretching, or compressing the fabric.
To test the invariance of logical behavior under geometric perturbations, the same knitted samples were subjected to twisting, stretching, and compression before actuation.
In all cases, unraveling propagated along the same pathways and terminated in identical outputs.
Even when the fabric was twisted 180$^\circ$ about its center, bended to fold or stretched by more than 50\% in one direction, the pattern of unraveling was unchanged (Fig.~\ref{robust}).
This confirms that this computer is robust to geometric deformations.

\begin{figure}
    \centering
    \includegraphics[width=1.0\linewidth,bb=0 0 260 230]{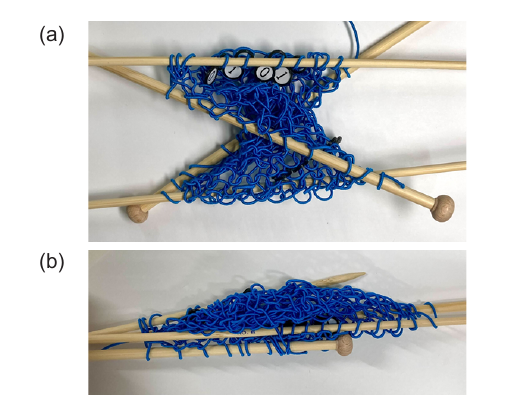}
    \caption{It was confirmed that the output was robust against (a) twisting and (b) bending.}
    \label{robust}
\end{figure}

\subsection{Role of mechanics}

\begin{figure}
    \centering
    \includegraphics[width=0.9\linewidth,bb=0 0 170 210]{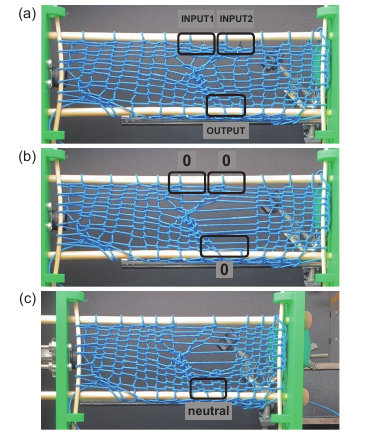}
    \caption{(a) A knitted AND gate stretched under a tension of 21 N. Two stitches are held by metal rings to initialize the input state. (b) Computation after applying the input (0, 0) by removing the rings. The unraveling propagated through the textile and produced the expected output. (c) The same gate after reknitting the unraveled stitches and applying a smaller tension of 18 N. In this case, the unraveling stopped before reaching the output stitch, resulting in an incomplete computation. No incorrect output is produced, and the incomplete state is distinguished from a completed computation.}
    \label{mechanics}
\end{figure}

Although the present work focuses on topology rather than mechanics, it is useful to clarify the role of mechanics in the execution of computation. The logic introduced in this study is defined by the topology of the textile and are independent of the particular yarn material or geometry. Nevertheless, the physical execution of the computation requires the propagation of unraveling, which is influenced by mechanical factors such as tension, friction, and yarn properties. Previous studies have shown that the mechanical response of knitted fabrics depends on both constituent materials and manufacturing parameters~\cite{gonzalez2024pulling}. Therefore, while mechanics does not determine the logic, it can influence if and how efficiently the computation is carried out.

Insufficient mechanical driving does not generate incorrect logical outputs. Instead, it prevents the computation from reaching completion. Figure~\ref{mechanics} shows experiments of logic gate performed under different stretching forces, where a knitted gate was first stretched to a designated force and then implemented by releasing selected stitches corresponding to the input bits. In the example shown, the computation completed successfully at 21~N, producing the expected output, whereas at 18~N the unraveling stopped before reaching the output bit. Importantly, the incomplete case was distinguishable from a completed computation because both the output stitches remained knitted. We thus observed either successful computation or incomplete propagation, but not an incorrect logical output.

These observations suggest that mechanics determines if a computation can be executed, and the logical output is determined by topology. The threshold force required for successful propagation is not expected to be universal because the mechanical response of knitted fabrics depends on yarn properties, geometry, and frictions. But, this separation between logic and mechanics is a key feature of the knitted computer.

\section{Discussion}
\label{discussion}

Mechanical computation has been realized in a wide variety of physical systems, including kirigami and origami structures~\cite{yang2024mechanical,treml2018origami}, architected elastic materials~\cite{raney2016stable,byun2024integrated,kwakernaak2023counting,waheed2020boolean,zhang2023mechanical}, and soft pneumatic circuits~\cite{weaver2010static,mosadegh2011next,preston2019digital,song2021cmos,stanley2024high,drotman2021electronics}.
These approaches have demonstrated that information processing can be realized in mechanical deformation.

In most mechanical computers, the logic function is linked to the geometry of the materials. The operation of the components depends on geometric parameters such as thicknesses, stiffnesses, or shapes, which must be chosen appropriately to achieve the desired behavior. As a result, variations introduced during fabrication or deformation can modify the mechanical response of the system and affect its function. Although many designs achieve good robustness within their intended operating range, maintaining reliable performance often requires careful control of geometry and material properties. These considerations motivate us for computational architectures whose logic is defined independently of details of geometry.

In the knitted computer, the logic and its physical execution are separated. The propagation rules are determined by the topology, and therefore remain unchanged under deformation of the textile. By contrast, mechanical parameters such as tension, friction, and yarn properties influence whether unraveling can propagate through the fabric and complete the computation. As demonstrated in the experiments, insufficient force may stop the propagation, but it does not generate wrong outputs. This distinction is a defining feature of the knitted computer. Topology specifies the logic, whereas mechanics provides the physical system by which it is executed.

\section{conclusion}

We implemented information processing on a fabric through chains of unraveling and introduced a mechanical computer in which logic operations are defined by topology rather than than geometry. The knitted fabric performs computation through the controlled unraveling of a single continuous yarn, where each bit is represented by a pair of stitches and logic functions are determined by the interaction among the stitches.

This work establishes a fundamental distinction between logic, determined by topology, and mechanics, which executes the process. This separation makes the computation robust against geometric and material variations and suggests a topological mechanical computer. This framework may be extended to systems across a wide range of length scales, including polymers, textiles, and architectural structures.

\section{Methods}
\label{physical}

The logic elements were knitted from a single yarn composed of a natural rubber cord covered with rayon fiber (KW91325, Kawamura Seichu Co., diameter 1~mm).
The yarn was knitted manually into a textile with a desired topology. Four wooden rods (diameter 3 or 5~mm) were inserted to the edges, and a rectangular frame was formed. The frame made it easy to drag the edges to apply tension to the fabric.
Typical circuit samples contained $6$-$10$ stitches in width and $6$-$20$ rows in height, depending on the number of gates implemented.

Inputs were set according to the two-cell convention described in Sec.~\ref{propagation_sets_for_logic}.
For each input bit, one of the two adjacent stitches was released to set its state to $U$, while the other remained in state $K$.
Before computation, all other stitches were knitted and kept as state $K$.
This configuration fixed the initial condition $\sigma^{(0)}$ corresponding to the desired inputs. 

In the experiment, the fabric was stretched to apply tension by pulling the vertical rods outward using a linear guide and a stepping motor with the horizontal rods fixed. Then, some designated stitches were released to set the input value. The computation proceeds immediately after setting the input. The stress was measured by load cell (LUR-A-100NSA1, Kyowa) attatched to the vertical rods.

\begin{acknowledgments}
This work benefited from fruitful discussions with, but not limited to S. Poinloux. The author thanks the use of the facilities at the Wada Laboratory, Ritsumeikan University, and acknowledges the financial support provided by JSPS KAKENHI Grant-in-Aid for JSPS Fellows 23KJ0753 and 26KJ0356. 
\end{acknowledgments}

\bibliography{knit}

\appendix

\begin{figure}
    \centering
    \includegraphics[width=1.0\linewidth,bb=0 0 260 200]{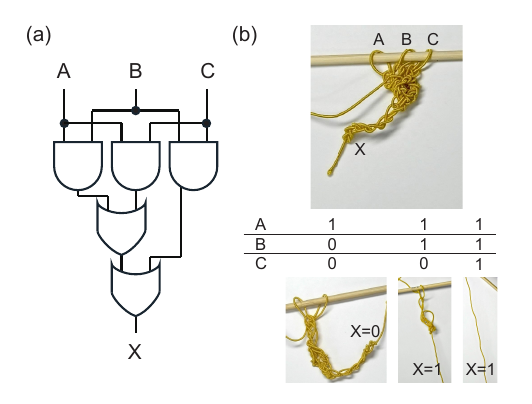}
    \caption{Circuit diagram (a) and experimental realization (b) of the three-input majority function. It was confirmed that the output is 1 only when more than one input is 1.}
    \label{monotone}
\end{figure}

\section{Monotone logic and compact design}
\label{monotone_section}

The knitted computer developed here can make monotone Boolean functions, which include only AND and OR without NOT gate, in the easier way than the universal logic circuit.  
In this representation, the knitted and unknitted states of one stitch can be directly assigned to logical 0 and 1, respectively, eliminating the need for the two-cell encoding used for universal logic.  
Each cell then represents a single bit, and the product and sum modules implement AND and OR through their topological connectivity.  

This monotone representation sacrifices universality of logic function and neutral state, but enables more compact circuit layouts.  
Because each bit corresponds to a single cell, the overall fabric size and complexity can be reduced.  
Experimentally, we verified that monotone circuits such as the three-input majority function
\begin{equation}
    X=(A\land B)\lor (B\land C)\lor (C\land A)
\end{equation}
works correctly under this simplified way.  
The result confirms that the computation can implement monotone Boolean functions.

\end{document}